\documentclass[acmsmall,screen,nonacm]{acmart}

\usepackage[noend]{algpseudocode}
\usepackage{algorithm}
\usepackage{amsmath,amssymb,amsfonts}
\usepackage{balance}
\usepackage{cleveref}
\usepackage{diagbox}
\usepackage{graphicx}
\usepackage{multirow}
\usepackage{physics}
\usepackage{rotating,tabularx}
\usepackage{soul}
\usepackage{subfig}
\usepackage{textcomp}
\usepackage{threeparttable}
\usepackage{tikz}
\usepackage{wrapfig}
\usepackage{xcolor}
\usepackage{xfrac}
\usepackage{xspace}

\def\BibTeX{{\rm B\kern-.05em{\sc i\kern-.025em b}\kern-.08em
    T\kern-.1667em\lower.7ex\hbox{E}\kern-.125emX}}

\newcommand{\sys}{QBX\xspace}
\newcommand{\tket}{t$|$ket$\rangle$}
\theoremstyle{definition}
\newtheorem{definition}{Definition}[section]
\newtheorem{theorem}{Theorem}

\title{\sys: A Compiler for 2-local Qubit Hamiltonian Simulation on Quantum Chiplets}
\author{Zikun Li}
\email{zikunl@andrew.cmu.edu}
\affiliation{\institution{Carnegie Mellon University}
  \city{Pittsburgh}
  \country{USA}
}
\author{Zhuoming Chen}
\email{zhuominc@andrew.cmu.edu}
\affiliation{\institution{Carnegie Mellon University}
  \city{Pittsburgh}
  \country{USA}
}
\author{Zhihao Jia}
\email{zhihao@cmu.edu}
\affiliation{\institution{Carnegie Mellon University}
  \city{Pittsburgh}
  \country{USA}
}

\begin{document}

\begin{abstract}

2-local qubit Hamiltonian simulation, a fundamental task in quantum computing, is widely applied in various applications. This paper presents \sys, the first quantum compiler designed for 2-local qubit Hamiltonian simulation on quantum chiplet architectures.
Existing general-purpose quantum compilers for chiplet architectures target quantum programs at the gate level and miss optimizations requiring high-level semantics of Hamiltonian simulation.
Conversely, existing domain-specific compilers for Hamiltonian simulation are designed for monolithic architectures, and their limited scalability and inability to adapt to the heterogeneity of chiplet architectures lead to sub-optimal circuit outputs.
\sys implements a scalable hierarchical approach to reduce both the amount and distance of cross-chiplet communications. It integrates the {\em highway} mechanism, a chiplet-oriented solution for efficient long-range qubit communication, to diminish cross-chiplet interaction costs. Furthermore, \sys efficiently aggregates 2-local Pauli strings into multi-target controlled gates, harnessing highways for deep optimization.
Our evaluations show that \sys markedly outperforms both general-purpose and domain-specific compilers in circuit depth and operation count, while scaling to much larger 2-local Hamiltonian simulations.

\end{abstract}
\maketitle

\section{Introduction}

Quantum computing provides great potential to achieve significant acceleration over classical computers, representing a computational paradigm shift in various domains, as highlighted in seminal works~\cite{grover1996fast, shor1999polynomial, qft, hhl}.
However, existing quantum devices, particularly those in the noisy intermediate-scale quantum (NISQ) era, face significant challenges, including  low qubit capacity, non-negligible operation noise, and qubit decoherence~\cite{preskill2018quantum}.
These limitations make executing complex quantum algorithms, such as Shor's~\cite{shor1999polynomial} and HHL~\cite{hhl}, impractical in the near term.

In the realm of practical algorithms suited for NISQ devices, a standout candidate is {\em Hamiltonian simulation}, which aims to simulate the time evolution of a physical system relative to its Hamiltonian~\cite{feynman2018simulating}.
This simulation is a daunting task for classical computers due to exponential computational complexity.
In particular, the 2-local qubit form of Hamiltonian simulation is widely used in many applications, such as
the Ising model for studying ferromagnetism in materials \cite{cipra1987introduction, newell1953theory}, the XY model in phase transition studies related to superfluidity and superconductivity \cite{babaev1999nonperturbative, ohta1979xy}, and the Heisenberg model for exploring quantum magnetic properties \cite{sachdev1993gapless}. In addition, the quantum approximate optimization algorithm (QAOA) \cite{qaoa} also utilizes a 2-local qubit Hamiltonian structure.

Supporting Hamiltonian simulations on today's quantum devices is challenging.
Though they require a modest number of qubits and qubit coherence time, only minimal problem instances are currently feasible to run on quantum devices.
{\em Superconducting-based} quantum computing is a leading technology pathway, offering up to 433 qubits on a single chip~\cite{ibm}.
However, scaling quantum chips with a monolithic architecture faces obstacles such as increased frequency collision likelihood and reduced manufacturing yield \cite{smith2022scaling}.
Additionally, the capacity limits of cryogenic dilution refrigerators impose constraints on chip scale \cite{krinner2019engineering, ang2022architectures}.
These challenges have spurred interest in {\em chiplet architectures}, where multiple smaller, interconnected chiplets form a multi-chip module (MCM) \cite{smith2022scaling, chow2021quantum, laracuente2022modeling}.
This architecture reduces frequency collision rates and enhances manufacturing yield \cite{smith2022scaling}.
Recent advancements in short-range, inter-chip connections \cite{wallraff2018deterministic, zhong2021deterministic, zhou2021modular, magnard2020microwave} also promote this modular approach.
Consequently, the chiplet architecture is gaining significant attention in both research and industry circles as a viable solution for scaling quantum capacity in the near term.

Compiling quantum programs for chiplet architectures introduces additional challenges and complexities.
Multi-chip modules, in contrast to monolithic chips, significantly expand the device's scale, necessitating a reassessment of communication costs.
Existing quantum compilers, such as Qiskit \cite{qiskit} and \tket \cite{tket}, realize inter-qubit communication through qubit routing, which will largely increase the depth and number of SWAPs in the compiled circuits as the quantum device size grows.
Another challenge arises from the heterogeneity of cross-chiplet couplings, which are significantly noisier than intra-chiplet couplings.
Quantum compilers for monolithic chips generally ignore this heterogeneity, resulting in circuits with suboptimal performance and fidelity.
A notable advancement is {\em highway}~\cite{highway}, a chiplet-specific mechanism for efficient long-range communication between qubits (see \Cref{subsec:highway} for an introduction).
Using highway for cross-chiplet communication minimizes the cost of qubit routing and avoids intensive usage of cross-chiplet couplings.
Highway also supports multi-target controlled gates where multiple target qubits are simultaneously entangled with the same control qubit, increasing program concurrency.

Existing quantum compilers for chiplet architectures (e.g., MECH~\cite{highway}) are designed to optimize general quantum programs at the gate level, and therefore miss domain-specific optimizations that require high-level semantics of Hamiltonian simulation, such as commutativity of Pauli strings. On the other hand, today's domain-specific compilers for Hamiltonian simulation~\cite{2qan, paulihedral, dallaire2016quantum, shi2019optimized} target monolithic architectures, and generate suboptimal circuits for chiplet architectures.

To bridge this gap, this paper introduces \sys, the first domain-specific quantum compiler for 2-local qubit Hamiltonian simulation on chiplet architectures.
Different from MECH\cite{highway}, which dynamically schedules the building and consumption of highways, \sys adopts a {\em static} approach to leveraging highways in a more structured way.
Specifically, since highways are expensive to construct,
\sys uses highways for (1) cross-chiplet communications, and (2) intra-chiplet communications where the benefit of using a highway outweighs its cost.
To minimize cross-chiplet communications, \sys introduces a scalable {\em hierarchical mapping} approach that decomposes qubit mapping into two tasks: one assigns qubits to chiplets, and the other maps all qubits of a chiplet to its registers.
The rationale behind this approach is that the problem of assigning qubits to chiplets involves a much smaller scale than the original mapping problem, allowing \sys to directly minimize cross-chiplet communications and their distances.
In addition, based on a key observation that multiple 2-local Pauli strings can be transformed into multi-target controlled gates, which can be executed together using highways, \sys's {\em Pauli block scheduler} leverages the commutativity of Pauli strings to systematically aggregate 2-local Pauli strings and schedules the aggregated circuits toward minimal depth.

The aforementioned techniques allows \sys to significantly outperform existing quantum compilers for 2-local qubit Hamiltonian simulation.
Specifically, \sys reduces the depth of compiled circuits by up to 8.9$\times$ and 53.2$\times$ compared with Qiskit \cite{qiskit} and \tket \cite{tket}, two widely used general-purpose quantum compilers.
When compared with 2QAN~\cite{2qan}, a domain-specific compiler for 2-local qubit Hamiltonian simulation, \sys achieves on par performance and is 19.5$\times$ faster than 2QAN for small problems, while scaling to significantly larger problems.

\section{Background and Related Works}
\label{sec:background}

\subsection{Background}

\paragraph{Pauli strings}
For an $n$-qubit system, a Pauli string is a tensor product of $n$ Pauli operators (i.e. the $X$, $Y$ and $Z$ operators) or the identity operator, that is, $P = \otimes_{i=0}^{n-1} \sigma_i$, where $\sigma_i \in \{I, X, Y, Z\}$.
The $i$-th operator corresponding to the $i$-th qubit.
In this paper, we denote a Pauli string in the format of $\sigma_{n-1}\dots\sigma_1\sigma_0$ (e.g. $I_2Y_1Z_0)$, where subscript denotes the index of qubit corresponding to the operator, and some times identity are omitted for simplicity (e.g. $Z_2Z_0$).
We also use the {\em type} of a 2-local Pauli string to refer to its non-identity operator type (e.g. $Z_2Z_0$ has type $ZZ$).
A special type of Pauli string is \textit{$k$-local} Pauli strings, which apply non-identity Pauli operators to exactly $k$ qubits.
For instance, $Y_1X_0$ and $Y_2Z_0$ are 2-local, while $Y_3X_2X_1$ is not.

\paragraph{Hamiltonian Simulation}

The development of quantum computers is significantly motivated by Hamiltonian simulation \cite{feynman2018simulating}.This process involves simulating the state of a system, denoted as $\ket{\psi(t)}$, at a specific time $t$ from its initial state $\ket{\psi(0)}$.
According to the Schrödinger equation, $\ket{\psi(t)} = e^{-iHt}\ket{\psi(0)}$, where $H$ is the system's Hamiltonian.
The primary challenge is directly devising a quantum circuit that synthesizes $e^{-iHt}$ is complex; thus, current methods involve decomposing the Hamiltonian into a sum of Pauli strings (expressed as $H = \sum_j^L \lambda_j P_j$) which are simpler to exponentiate.
With the approximation given by Trotter Formula \cite{trotter}:
$e^{-iHt} = (\prod_j^L e^{-i\lambda_j P_j\Delta t})^{\frac{t}{\Delta t}} + O(t \Delta t)$,
an approximation of $\ket{\psi(t)}$ can be obtained by synthesize the circuit for $\prod_j^L e^{-i\lambda_j P_j \Delta t}$ and apply it $\frac{t}{\Delta t}$ times to $\ket{\psi(0)}$.
This approach breaks down Hamiltonian simulation to synthesizing the exponential of its Pauli strings $P_0, P_1 \dots, P_{L-1}$, which can be done in a structural way {as discussed below}.
Moreover, the order of the Pauli strings in the Trotter Formula is not specified --- the Pauli strings can be ordered in any of the $L!$ permutations, which means that commuting the exponential of two Pauli strings does not affect the simulation result.

Most real-world Hamiltonians consist of 1- and 2-local Pauli strings. We focus on three widely used physics models of this form: the Transverse Ising model, XY model, and Heisenberg model:
$H_{\textrm{Ising}} = \sum_{\langle i,j\rangle \in E} \gamma_{ij} Z_iZ_j + \sum_k \alpha_k X_k$,
$H_{\textrm{XY}} = \sum_{\langle i,j\rangle \in E}  (\alpha_{ij} X_iX_j + \beta_{ij} Y_iY_j)$,
 $H_{\textrm{Heisenberg}} = \sum_{\langle i,j\rangle \in E}  \allowbreak (\alpha_{ij} X_iX_j + \beta_{ij} Y_iY_j + \gamma_{ij} Z_iZ_j)$,
where $E$ is the set of qubit pairs.
These models share two properties we exploit: (1) 2-local Pauli strings contain homogeneous non-identity operators (e.g., $ZZ$ but not $YZ$); and (2) different Pauli types often act on the same qubit pairs (e.g., $X_iX_j$ implies $Y_iY_j$ in the XY model).

\paragraph{Synthesize the Exponential of 2-Local Pauli Strings}
\label{subsec:synthesize_pauli}

\begin{figure}
    \centering
    \includegraphics[scale=0.75]{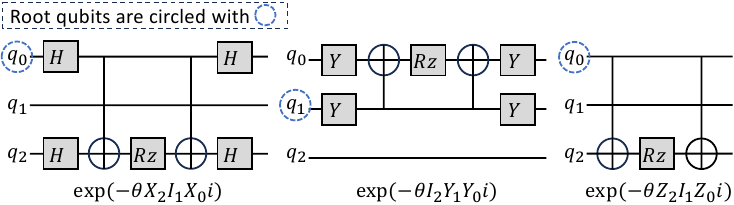}
    \caption{Synthesizing the exponential of 2-local Pauli strings.}
    \label{fig:pauli_string_synthesizing}
\end{figure}

The synthesis of 2-local Pauli strings into circuits follows the structure in \Cref{fig:pauli_string_synthesizing}. The circuit is symmetric: each end applies single-qubit gates to qubits with non-identity operators—$H$ for $X$, $Y$ for $Y$, and identity for $Z$. Inside, two symmetric CNOTs are placed, with one qubit designated as the \textit{root} (control) and the other as the target, allowing flexibility in synthesis. An $R_z$ gate, parameterized by the Pauli string’s exponential factor, is inserted between the CNOTs.

\paragraph{Quantum Architectures and Chiplets}

In the realm of quantum computing, there are three main architectures: monolithic, distributed, and chiplet. The monolithic architecture
faces scalability challenges in increasing qubit counts to meet the demands of complex algorithms \cite{grover1996fast, shor1999polynomial, hhl} and issues with frequency collisions and fabrication yields as the scale increases \cite{smith2022scaling}. The distributed architecture, linking multiple quantum nodes via interconnect systems\cite{han2021microwave, lambert2020coherent}, addresses some scalability issues but suffers from noisier and slower inter-node connections \cite{ang2022architectures}.
Chiplet architecture integrates identical chiplets with a host chip to form multi-chip modules (MCMs) \cite{smith2022scaling, chow2021quantum, laracuente2022modeling}, reducing frequency-collision risks of larger devices and improving scalability and efficiency. Unlike distributed architectures that assume all-to-all connectivity, chiplet architectures feature less noisy inter-chip links restricted to adjacent chiplets. We refer to this arrangement as the {\em chiplet topology graph}.

\paragraph{Highway}
\label{subsec:highway}

MECH \cite{highway} introduces the concept of a \emph{highway} to enable efficient qubit communication in circuit compilation on chiplet architectures, formally defined as follows:

\begin{definition}[Highway]
 A highway is a quantum state entangling multiple qubits. Suppose the number of qubits involved is $m$, a highway is
 \[\frac{1}{\sqrt{2}}(\ket{0}^{\otimes m} + \ket{1}^{\otimes m})\]
\end{definition}

Highways offer two advantages: (1) long-distance communication at constant circuit depth, unlike the linear depth growth of conventional routing; and (2) efficient multi-target controlled gates (\emph{highway gates} \cite{highway}), where $O(n)$ CNOTs with a shared control are implemented in $O(1)$ depth instead of $O(n)$ (see \Cref{fig:multi_target}). The formal definition of a highway gate is as follows:

\begin{definition}[Highway gate]
A highway gate is defined as a sequence of CNOT operations sharing the same control qubit but with distinct target qubits, applied sequentially with no other gates interleaved.
\end{definition}

Highways build on generalized GHZ and cat-like states \cite{generalized_ghz}. An $m$-qubit GHZ state is $\tfrac{1}{\sqrt{2}}(\ket{0}^{\otimes m} + \ket{1}^{\otimes m})$, while a cat-like state is $\alpha\ket{0}^{\otimes m} + \beta\ket{1}^{\otimes m}$.
\cite{generalized_ghz} introduces a \emph{cat entangler} to create a cat-like state from a single qubit $\alpha \ket{0} + \beta\ket{1}$ and a generalized GHZ state, using local operations and classical communication. This cat-like state can control multiple targets. After entanglement, a \emph{cat disentangler}, shown in \Cref{fig:cat_entangler}, is used to disentangle the control qubit from the cat-like state.
While both cat entangler and disentangler are efficiently implementable with constant depth, preparing GHZ states efficiently is challenging.
MECH \cite{highway} proposes a fast method using mid-circuit measurement \cite{hua2023caqr} (\Cref{fig:fast_preparation}).
For long-distance CNOT gates with constant depth, a generalized GHZ state is first constructed along the path between two qubits, then entangled with the control qubit to form a cat state for controlling the remote target qubit. Similar principles apply to multi-target controlled gates.
Highway construction requires \emph{ancilla qubits}, auxiliary qubits distinct from \emph{data qubits}, which remain in $\ket{0}$ when idle.

\subsection{Related works}
\label{subsec:related_work}

\begin{figure*}
  \centering
  \subfloat[] {
    \includegraphics[width=0.4\textwidth]{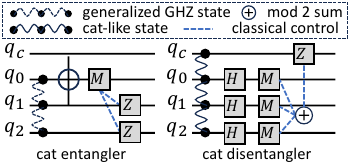}
    \label{fig:cat_entangler}
  }
  \\
  \subfloat[] {
        \includegraphics[width=0.43\textwidth]{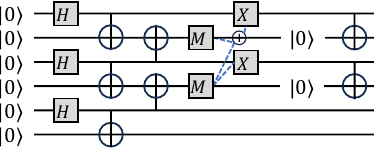}
        \label{fig:fast_preparation}
  }
  \subfloat[] {
        \includegraphics[width=0.35\textwidth]{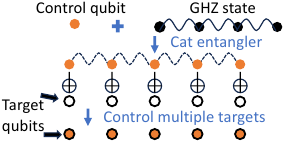}
        \label{fig:multi_target}
  }
  \caption{(a) The mechanism of the cat entangler and cat disentangler. (b) The fast preparation of generalized GHZ states. (c) Multiple target being controlled in parallel by a cat-like state.}
  \label{fig:pauli_string_synthesis}
\end{figure*}
Quantum compilation aims at mitigating the issues, including noisy operation, short qubit decoherence time and sparse physical qubit connectivity, of near term quantum devices.

\paragraph{General-purpose compilers}
General-purpose quantum compilers \cite{qiskit, tket, li2023quarl, xu2022quartz, xu2023synthesizing, kissinger2019pyzx} optimize circuits to reduce operation count and depth, thereby improving fidelity. To handle limited qubit connectivity, they insert SWAP gates to move states between connected qubits—a process called {\em qubit routing}. Routing efficiency heavily depends on the initial mapping of logical to physical qubits, making the mapper a critical component of these compilers.

\paragraph{Compilers for the chiplet architecture}

The chiplet architecture introduces unique compilation challenges. Its scalability enlarges device size, increasing communication costs between distant qubits through added depth and SWAPs. Cross-chiplet couplings are also noisier than on-chip links, making it essential to minimize their use. Moreover, unlike distributed architectures with all-to-all connectivity, compilation must account for the chiplet topology graph. MECH \cite{highway} addresses these challenges by using highways for cross-chiplet communication, reducing routing costs and reliance on error-prone couplings.

\paragraph{Domain-specific compilers for Hamiltonian simulation}
Domain-specific compilers \cite{2qan, paulihedral, li2021software} exploit Hamiltonian semantics—particularly Pauli string commutativity—to improve performance. However, they ignore the heterogeneity of chiplet architectures and cannot scale to problem sizes suited for them. For example, 2QAN’s scheduling complexity is $O(m^4)$, where $m$ is the number of Pauli strings (at least linear in qubit count), severely limiting scalability.

\section{Motivation}

To motivate \sys’s design, this section presents new findings about highways that make them particularly effective for 2-local Hamiltonian simulation.

\subsection{2-Local Pauli String Aggregation}
\label{subsec:pauli_string_aggregate}

\begin{figure*}[t]
    \centering
    \subfloat[Commutativity of CNOTs.] {
        \includegraphics[width=0.3\textwidth]{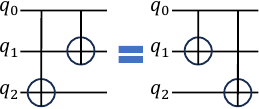}
        \label{fig:cnot_commute}
      }
    \\
      \subfloat[Example of Pauli string aggregation.] {
        \includegraphics[width=0.8\textwidth]{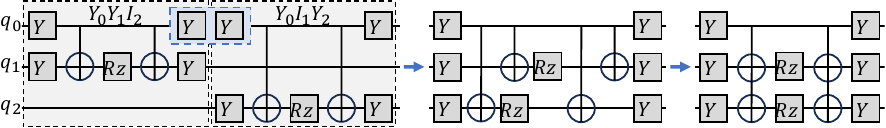}
        }
    \caption{
    An example of aggregating two Pauli strings, $Y_0Y_1I_2$ and $Y_0I_1Y_2$, into two highway gates with three layers of single-qubit gates. The original circuit contains the two Pauli strings; first, the $Y$ gates in the blue dotted box cancel. Next, the rewrite rule in (a) groups the CNOTs, which are then aggregated into highway gates when they share a control qubit.
    }
    \label{fig:aggregation_example}
\end{figure*}

As stated in \Cref{subsec:highway}, a core use case of highway is highway gates.
Intuitively, the compiler should find as many CNOT gates sharing the same control qubit as possible and aggregate them into a highway gate.
MECH \cite{highway} uses circuit rewriting rules to maximize this sharing.

\begin{figure}
    \centering
    \includegraphics[scale=0.80]{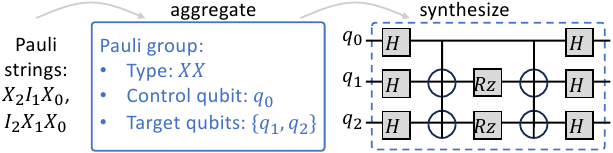}
    \caption{Aggregating two Pauli strings into a Pauli set.}
    \label{fig:pauli_set}
\end{figure}

The structure of 2-local Hamiltonian simulation enables {\em systematic aggregation} of highway gates. When multiple 2-local Pauli strings of the same type share a qubit, they can be aggregated by choosing that qubit as the root. For example, \Cref{fig:aggregation_example} shows how \sys aggregates $Y_0Y_1I_2$ and $Y_0I_1Y_2$, both $YY$ strings sharing qubit $q_0$ as the root. This approach generalizes to $XX$ and $ZZ$ strings, as well as to any set of $n>2$ strings of the same type sharing a qubit.

Because the Trotter formula holds for all permutations of Pauli strings (\Cref{sec:background}), all 2-local Pauli strings of the same type sharing a qubit can be aggregated. To capture this, we introduce the notion of a {\em Pauli set}, representing the aggregation of multiple 2-local Pauli strings. As shown in \Cref{fig:pauli_set}, each Pauli set is defined by three attributes: the Pauli type, a control (root) qubit, and the set of target qubits. Each Pauli string $p$ also has a parameter $\theta$ from its exponential $e^{-i\theta p}$, but since $\theta$ does not affect the structure of the set, we omit it in system design while tracking it in \sys’s implementation. \sys opportunistically aggregates Pauli strings into Pauli sets to optimize circuits. Given a set of 2-local Pauli strings, multiple aggregation plans may exist. For example, ${Z_2Z_1, Z_2Z_0, Z_1Z_0}$ can be aggregated in three ways, each forming two Pauli sets. \Cref{subsec:schedule} describes how \sys selects an aggregation plan.

\subsection{Highway Reuse}
\label{subsec:highway_reuse}

\begin{figure*}
\centering
  \subfloat[] {
    \includegraphics[width=0.13\textwidth]{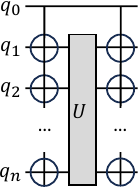}
    \label{fig:thm1}
  }
  \subfloat[] {
    \includegraphics[width=0.26\textwidth]{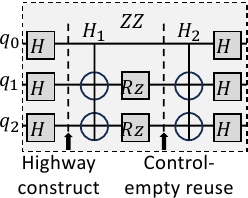}
    \label{fig:zz_example}
  }
  \subfloat[] {
    \includegraphics[width=0.13\textwidth]{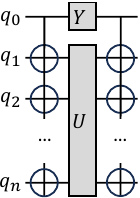}
    \label{fig:zz_}
  }
  \\
  \subfloat[] {
    \includegraphics[width=0.75\textwidth]{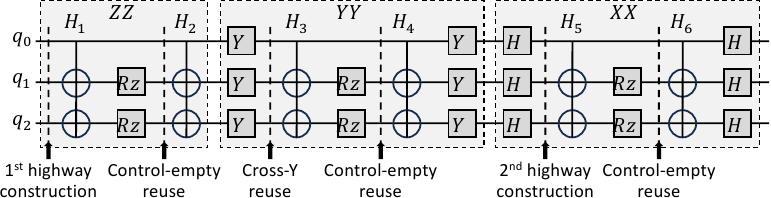}
    \label{fig:heisenberg_example}
  }
  \caption{Different ways to reuse a highway and examples. (a) Control-empty highway reuse. (b) An example of control-empty highway reuse in $ZZ$ Pauli sets. (c) Cross-Y highway reuse. (d) An example of cross-Y highway reuse in a Heisenberg model. In both (a) and (c), the two highway gates share the same control qubit (i.e., $q_0$) and target qubits (i.e., $q_1,...,q_n$).}
  \label{fig:theorem_1}
\end{figure*}

Another key insight motivating \sys is that {\em highways are reusable}. In this section, we present two reuse cases specific to 2-local Hamiltonian simulation.

\paragraph{Case 1: control-empty highway reuse.} The first type of highway reuse identifies scenarios where two highway gates with the same control and target qubits are separated by gates applying {\em only} to the target qubits (i.e., no gates performing on the control qubit between the two highway gates).
As shown in \Cref{fig:thm1}, $q_0$ is the control for both highways, separated by a unitary $U$ on targets $q_1, q_2,\ldots,q_n$.
This pattern is very common in 2-local Hamiltonian simulation: within a Pauli set, two identical highway gates are separated by $R_z$ layers on the targets.
For example, \Cref{fig:zz_example} shows control-empty reuse in a $ZZ$ Pauli set, where a highway built for $H_1$ is reused for $H_2$.
Correctness is established in \Cref{thm:1}.

\begin{theorem}[Control-empty highway reuse]
\label{thm:1}
    For two highway gates sharing the same control and target qubits, if they are separated by gates applied to {\em only} the target qubits, the highway established for the first highway gate can be directly reused by the second highway gate without reconstruction.
\end{theorem}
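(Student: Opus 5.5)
Our plan is to describe the state after the cat entangler explicitly and check that a unitary acting only on the targets leaves the part of the state needed by the second highway gate unchanged. We write the control qubit $q_0$ as $\alpha\ket{0}+\beta\ket{1}$; by linearity it suffices to handle a general joint state $\sum_{c\in\{0,1\}}\ket{c}_{q_0}\otimes\ket{\phi_c}_{\mathrm{rest}}$, where $\mathrm{rest}$ collects all other data qubits. After the cat entangler, the highway ancillas and the control are in a cat-like state correlated with the control value. Concretely, the joint state is $\sum_c \ket{c}_{q_0}\ket{c}^{\otimes m}_{\mathrm{anc}}\ket{\phi_c}$, up to Pauli corrections fixed by the classical measurement outcomes, which we absorb into the entangler.

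\textbf{Step 1.} We show that the first highway gate acts as $\sum_c \ket{c}_{q_0}\ket{c}^{\otimes m}_{\mathrm{anc}}\, (X^{c})^{\otimes n}_{\mathrm{tgt}}\ket{\phi_c}$. Each target receives a CNOT from an adjacent highway ancilla. Since every ancilla in branch $c$ holds value $c$, each target is flipped exactly when the control is $1$, which reproduces the sequential CNOTs of the highway gate definition.

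\textbf{Step 2.} We apply the intermediate unitary $U$, which by hypothesis acts only on target qubits. Since $U$ touches neither $q_0$ nor the ancillas, it commutes with the projectors onto $\ket{c}_{q_0}\ket{c}^{\otimes m}_{\mathrm{anc}}$. The state therefore becomes $\sum_c \ket{c}\ket{c}^{\otimes m}\, U (X^{c})^{\otimes n}\ket{\phi_c}$, and the correlation between the control and the ancillas is preserved. The same holds if $U$ is any gate sequence on the targets, or more generally on any qubits other than $q_0$ and the highway ancillas. We note that the highway ancillas are reserved while the highway is alive, so no gate of $U$ touches them.

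\textbf{Step 3.} The second highway gate, implemented through the same ancillas, now yields $\sum_c \ket{c}\ket{c}^{\otimes m} (X^c)^{\otimes n} U (X^c)^{\otimes n}\ket{\phi_c}$. This is exactly the result of CNOT-fan, then $U$, then CNOT-fan applied with a freshly built highway. Finally, the cat disentangler (measure ancillas in the $X$ basis and apply a $Z$ correction on $q_0$ by parity) returns $\sum_c \ket{c}_{q_0}\ket{\mathrm{out}_c}$ with the ancillas reset to $\ket{0}$. This matches the ideal circuit, so one entangler/disentangler pair suffices and no reconstruction is needed.

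We expect the main obstacle to be verifying that no gate acts on $q_0$ between the two highway gates. If some gate $V$ acted on $q_0$ in between, for example an $H$ mixing the $c$ branches, the invariant $\ket{c}_{q_0}\ket{c}^{\otimes m}$ would break. The disentangler's parity correction would then be wrong, so the proof must use the hypothesis exactly at Step 2. We would also double-check that the measurement-based GHZ preparation's Pauli frame corrections commute past $U$. We plan to handle this by applying all corrections at entanglement time, before $U$, so that they act only on ancillas and $q_0$.
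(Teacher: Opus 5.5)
Your proposal is correct and follows essentially the same route as the paper. The paper likewise writes the post-entangler state as $\sum_x \alpha_x \ket{0^{\otimes m+1}}\ket{x} + \sum_x \beta_x \ket{1^{\otimes m+1}}\ket{x}$, pushes the first highway gate, $U$, and the second highway gate through branch by branch (using that $U$ never touches the control or the ancillas), and disentangles to recover the local-execution result $\sum_x \alpha_x\ket{0}U\ket{x} + \sum_x \beta_x\ket{1}O_2UO_1\ket{x}$; your version only adds explicit detail on the disentangler's $X$-basis measurement with parity $Z$ correction and on Pauli-frame corrections, which the paper leaves implicit by citing prior work.
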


\begin{proof}
Let us consider a quantum system comprising a set of qubits, denoted by $Q$. Within this set, $A \subset Q$ represents the ancillary qubits used to construct quantum highways, and $D = Q \setminus A$ consists of the data qubits, where $|A| = m$ and $|D| = n$. We denote $q_0$ as the control qubit, $T_1 \subseteq D \setminus \{q_0\}$ as the target qubits for the first highway gate, and $T_2 \subseteq D \setminus \{q_0\}$ for the second highway gate. An arbitrary unitary operation $U$ is applied to the qubits in $D \setminus \{q_0\}$ between the execution of these two highway gates.

Previous works \cite{highway, generalized_ghz} have established that the final state resultant from executing two highway gates with highways is identical to the outcome achieved through their local execution. Thus, to affirm the efficacy of highway reuse, it suffices to demonstrate that the state resulting from highway reuse mirrors that derived from local gate applications.

Assume the initial state of the data qubits is given by
{
\small
\begin{align*}
\small
\sum_{x \in \{0, 1\}^{n-1}} \alpha_x \ket{0}\ket{x} + \sum_{x \in \{0, 1\}^{n-1}} \beta_x \ket{1}\ket{x}
\end{align*}
}
, with the first qubit being the control qubit. Upon applying the first highway gate, the system evolves to
{
\small
\begin{align*}\small
\sum_{x \in \{0, 1\}^{n-1}} \alpha_x \ket{0}\ket{x} + \sum_{x \in \{0, 1\}^{n-1}} \beta_x \ket{1}O_1\ket{x}
\end{align*}
}
, where $O_1$ is defined as
{
\small
\begin{align*}\small
  O_1 = \bigotimes_{i=1}^{n} O_i \quad \text{where} \quad O_i =
\begin{cases}
X & \text{if } q_i \in T_1 \\
I & \text{otherwise}
\end{cases}.
\end{align*}
}
Following the application of $U$ and the second highway gate, the state becomes:
{
\small
\begin{align*}\small
\sum_{x \in \{0, 1\}^{n-1}} \alpha_x \ket{0}U\ket{x} + \sum_{x \in \{0, 1\}^{n-1}} \beta_x \ket{1}O_2UO_1\ket{x},
\end{align*}
}
where $O_2$ is similarly defined by:
{
\small
\begin{align*}
  O_2 = \bigotimes_{i=1}^{n} O_i \quad \text{where} \quad O_i =
\begin{cases}
X & \text{if } q_i \in T_2 \\
I & \text{otherwise}
\end{cases}.
\end{align*}
}

Upon the construction of the quantum highway and its initial entanglement with the control qubit, the system is prepared in a superposition state that reflects the entanglement between the ancillary qubits and the control qubit alongside the data qubits' state
{
\small
\begin{align*}\small
\sum_{x \in \{0, 1\}^{n-1}} \alpha_x \ket{0^{\otimes m+1}}\ket{x} + \sum_{x \in \{0, 1\}^{n-1}} \beta_x \ket{1 ^{\otimes m+1}}\ket{x}
\end{align*}
}
Applying the first highway gate based on the control qubit's state modifies this to
{
\small
\begin{align*}\small
\sum_{x \in \{0, 1\}^{n-1}} \alpha_x \ket{0^{\otimes m+1}}\ket{x} + \sum_{x \in \{0, 1\}^{n-1}} \beta_x \ket{1 ^{\otimes m+1}}O_1\ket{x}.
\end{align*}
}
Following the unitary operation $U$ and the second highway gate, the system evolves to
{
\small
\begin{align*}\small
\sum_{x \in \{0, 1\}^{n-1}} \alpha_x \ket{0^{\otimes m+1}}U\ket{x} +  \sum_{x \in \{0, 1\}^{n-1}} \beta_x \ket{1 ^{\otimes m+1}}O_2UO_1\ket{x}
\end{align*}
}
The disentangler circuit then resets the ancillary qubits, yielding the final state:
{
\small
\begin{align*}
   \ket{0^{\otimes m}}\otimes (\sum_{x \in \{0, 1\}^{n-1}} \alpha_x \ket{0}U\ket{x} +  \sum_{x \in \{0, 1\}^{n-1}} \beta_x \ket{1 }O_2UO_1\ket{x})
\end{align*}
}
demonstrating that highway reuse achieves the same quantum state as local operations, thereby validating its efficiency and equivalence.
\end{proof}

\paragraph{Case 2: cross-Y highway reuse.}
The second type of highway reuse identifies scenarios where two highway gates have the same control and target qubits and there is {\em exactly one} $Y$ gate applied to the control qubit between the two highway gates, as illustrated in \Cref{fig:heisenberg_example}.
The only difference from control-empty reuse is this intervening $Y$.
This pattern commonly appears in the Heisenberg model when a $ZZ$ Pauli set is followed by a $YY$ set.
To reuse the highway, we apply a $Y$ to the control and an $X$ to all other qubits in the cat-like state.
\Cref{fig:heisenberg_example} illustrates this with three Pauli sets ($ZZ$, $YY$, $XX$): a highway built for $H_1$ is reused by $H_2$ (control-empty reuse), then extended to $H_3$ via cross-Y reuse and further to $H_4$. Reuse ends at $H_5$ due to an $H$ gate on the control.

\begin{theorem}[Cross-Y highway reuse]
\label{thm:2}
    For two highway gates sharing the same control and target qubits, if there is {\em exactly one} $Y$ gate on the control qubit between the two highway gates, are separated by gates applied to {\em only} the target qubits, the highway established for the first highway gate can be reused for the second highway gate by applying a $Y$ gate on the control qubit and an $X$ gate on all highway qubits.
\end{theorem}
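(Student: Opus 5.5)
We will follow the template of the proof of \Cref{thm:1}. As there, it suffices to show that the state obtained by reusing the highway equals the state obtained by executing the two highway gates and the intervening gates locally. By \cite{highway, generalized_ghz}, each highway gate executed through a cat-like state agrees with its local execution. We use the same notation: $q_0$ is the control, $T_1,T_2$ are the target sets with the operators $O_1,O_2$ defined as before, $U$ is the unitary on $D\setminus\{q_0\}$, and $m$ is the number of ancillary highway qubits. Since the $Y$ on $q_0$ commutes with $U$ (they act on disjoint qubits), we may take the intervening circuit to be $Y_{q_0}\otimes U$.

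\textbf{Local execution.} Start from $\sum_x \alpha_x\ket{0}\ket{x}+\sum_x\beta_x\ket{1}\ket{x}$. After the first highway gate and $U$, the state is $\sum_x \alpha_x\ket{0}U\ket{x}+\sum_x\beta_x\ket{1}UO_1\ket{x}$. Using $Y\ket{0}=i\ket{1}$ and $Y\ket{1}=-i\ket{0}$, the $Y$ on the control gives $i\sum_x\alpha_x\ket{1}U\ket{x}-i\sum_x\beta_x\ket{0}UO_1\ket{x}$. The second highway gate then applies $O_2$ to the branch whose control is $\ket{1}$, giving
$i\sum_x\alpha_x\ket{1}O_2U\ket{x}-i\sum_x\beta_x\ket{0}UO_1\ket{x}$.

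\textbf{Highway execution.} After entanglement, the first highway gate and $U$, the state is $\sum_x\alpha_x\ket{0^{\otimes m+1}}U\ket{x}+\sum_x\beta_x\ket{1^{\otimes m+1}}UO_1\ket{x}$. We then apply $Y$ on the control and $X$ on every other qubit of the cat state; this is the prescription in the statement, read as ``$X$ on all highway qubits other than the control''. This operator maps $\ket{0^{\otimes m+1}}\mapsto i\ket{1^{\otimes m+1}}$ and $\ket{1^{\otimes m+1}}\mapsto -i\ket{0^{\otimes m+1}}$, so the control and highway stay perfectly correlated. The result is $i\sum_x\alpha_x\ket{1^{\otimes m+1}}U\ket{x}-i\sum_x\beta_x\ket{0^{\otimes m+1}}UO_1\ket{x}$, which is again a valid cat-like state correlated with the control. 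The second highway gate fires on the all-ones branch, giving $i\sum_x\alpha_x\ket{1^{\otimes m+1}}O_2U\ket{x}-i\sum_x\beta_x\ket{0^{\otimes m+1}}UO_1\ket{x}$. The disentangler works on any state of the form $\ket{0^{\otimes m+1}}\ket{\phi_0}+\ket{1^{\otimes m+1}}\ket{\phi_1}$, as in \Cref{thm:1}. Applying it returns the ancillas to $\ket{0^{\otimes m}}$ and leaves exactly the local-execution state.

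\textbf{Main obstacle.} The delicate step is justifying the $Y$ applied at the control within the cat state. Applying $Y$ alone to $q_0$ would break the correlation between $q_0$ and the ancillas. One must check that $Y\otimes X^{\otimes m}$ preserves the GHZ-type structure, so that the disentangler (measurement in the $X$ basis plus classical $Z$ correction) remains valid. The phase factors $\pm i$ must also end up as the same relative phase as in the local circuit and not as an erroneous correction; both computations above show this. We would further verify that the relevant cat-state component is only the span of $\ket{0^{\otimes m+1}},\ket{1^{\otimes m+1}}$, on which $Y\otimes X^{\otimes m}$ acts as $Y$ on an encoded qubit. This makes the argument transparent: the highway is a repetition encoding of the control, and $Y\otimes X^{\otimes m}$ is the logical $Y$.
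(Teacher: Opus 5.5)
Your proposal is correct and follows essentially the same route as the paper. Both compute the local-execution state and the highway-execution state directly and then disentangle and compare. In the highway case, $Y$ on the control together with $X^{\otimes m}$ on the ancillas sends $\ket{0^{\otimes m+1}}\mapsto i\ket{1^{\otimes m+1}}$ and $\ket{1^{\otimes m+1}}\mapsto -i\ket{0^{\otimes m+1}}$. You differ only cosmetically: you commute $Y$ past $U$ first, you check that the disentangler works on any state in the span of $\ket{0^{\otimes m+1}}$ and $\ket{1^{\otimes m+1}}$, and you note that $Y\otimes X^{\otimes m}$ acts as a logical $Y$ on the repetition-encoded control, all of which the paper leaves implicit.
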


\begin{proof}
Adopting the notation from the proof of \Cref{thm:1}, we analyze the system's evolution through both non-highway and highway scenarios to demonstrate equivalence in the resulting quantum states.
Initially, without using the highway, the system state after applying the first highway gate is given by
{
\small
\begin{align*}
\sum_{x \in \{0, 1\}^{n-1}} \alpha_x \ket{0}\ket{x} + \sum_{x \in \{0, 1\}^{n-1}} \beta_x \ket{1}O_1\ket{x}
\end{align*}
}
Following the application of a $Y$ gate to the control qubit, the state transforms to
{
\small
\begin{align*}
\sum_{x \in \{0, 1\}^{n-1}} i\alpha_x \ket{1}\ket{x} - \sum_{x \in \{0, 1\}^{n-1}} i\beta_x \ket{0}O_1\ket{x}
\end{align*}
}
, since $Y\ket{0} = i\ket{1} \text{and } Y\ket{1} = -i\ket{0}$.
Next, applying the unitary gate $U$ to non-control qubits results in
{
\small
\begin{align*}
\sum_{x \in \{0, 1\}^{n-1}} i\alpha_x \ket{1}U\ket{x} - \sum_{x \in \{0, 1\}^{n-1}} i\beta_x \ket{0}UO_1\ket{x}
\end{align*}
}
The application of the second highway gate modifies the state to:
{\small
\begin{align*}
\sum_{x \in \{0, 1\}^{n-1}} i\alpha_x \ket{1}O_2U\ket{x} - \sum_{x \in \{0, 1\}^{n-1}} i\beta_x \ket{0}UO_1\ket{x}.
\end{align*}
}
In the highway reuse scenario, after the first highway gate, the state is
{
\small
\begin{align*}
\sum_{x \in \{0, 1\}^{n-1}} \alpha_x \ket{0^{\otimes m+1}}\ket{x} + \sum_{x \in \{0, 1\}^{n-1}} \beta_x \ket{1^{\otimes m+1}}O_1\ket{x}
\end{align*}
}
Applying a $Y$ gate to the control and $X$ gates to all highway qubits yields
{
\small
\begin{align*}
\sum_{x \in \{0, 1\}^{n-1}} i\alpha_x \ket{1^{\otimes m+1}}\ket{x} - \sum_{x \in \{0, 1\}^{n-1}} i\beta_x \ket{0^{\otimes m+1}}O_1\ket{x}
\end{align*}
}
The subsequent application of $U$ and the second highway gate leads to
{
\small
\begin{align*}
\sum_{x \in \{0, 1\}^{n-1}} i\alpha_x \ket{1^{\otimes m+1}}O_2U\ket{x} - \sum_{x \in \{0, 1\}^{n-1}} i\beta_x \ket{0^{\otimes m+1}}UO_1\ket{x}
\end{align*}
}
Finally, disentangling the control qubit from the highway qubits results in
{
\small
\begin{align*}
\ket{0^{\otimes m}}\otimes(\sum_{x \in \{0, 1\}^{n-1}} i\alpha_x \ket{1}O_2U\ket{x} - \sum_{x \in \{0, 1\}^{n-1}} i\beta_x \ket{0}UO_1\ket{x}).
\end{align*}
}
confirming the equivalence of the final states obtained through highway reuse and local gate application.
\end{proof}

\begin{figure}
    \centering
    \includegraphics[scale=0.78]{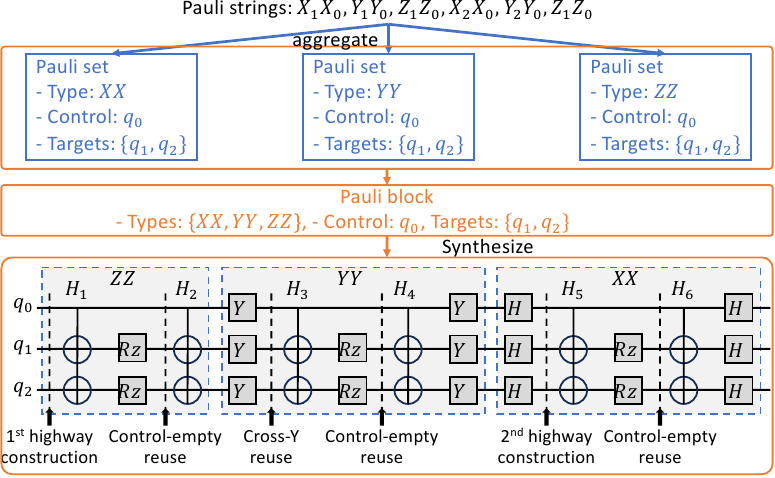}
    \caption{Aggregating three Pauli sets into a Pauli block and synthesizing a circuit by reusing highways. $H_1$-$H_4$ reuse the same highway, so do $H_5$-$H_6$.}
    \label{fig:highway_block}
    \end{figure}

As noted in \Cref{sec:background}, Pauli strings of different types act on the same qubits, allowing $XX$, $YY$, and $ZZ$ strings to be aggregated into their respective Pauli sets with the same aggregation plan. Leveraging \Cref{thm:2}, a $YY$ set can be placed after a $ZZ$ set on the same control–target pair, enabling cross-Y highway reuse. Accordingly, \sys aggregates $ZZ$ and $YY$ strings together and orders $YY$ sets after $ZZ$ sets to maximize reuse.

Placing Pauli sets with the same control and target qubits together also improves {\em locality}.
Executing a Pauli set requires routing its control and targets to the highway; grouping sets with identical qubits reduces this overhead, since the qubits remain near the highway after the first set.
To exploit this, \sys adopts a unified aggregation plan for $XX$, $YY$, and $ZZ$ sets, ordering them as $ZZ$, $YY$, $XX$ to maximize highway reuse and minimize routing cost.
We define a {\em Pauli block} as a group of Pauli sets with the same control and targets but different Pauli types.
\Cref{fig:highway_block} illustrates this: three Pauli sets share one control and the same targets, yielding six highway gates. After the first highway is built for $H_1$, gates $H_2$–$H_4$ reuse it; the highway is then disentangled, a second one is constructed, and $H_5$–$H_6$ reuse it via control-empty reuse.

\section{The Design of \sys}
\label{sec:overview}

\begin{figure*}
    \centering
    \includegraphics[scale=0.67]{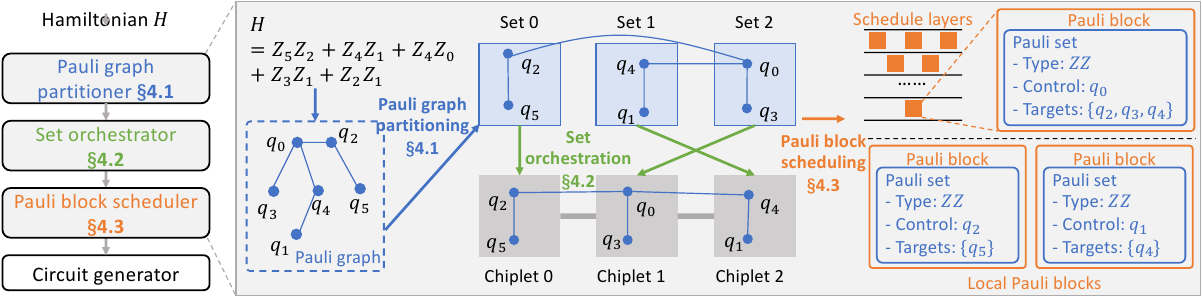}
    \caption{
    Overview of \sys. The example illustrates three components: (1) the Pauli graph partitioner builds and partitions the Pauli graph into three sets; (2) the set orchestrator maps these sets to chiplets to minimize communication; and (3) the Pauli block scheduler aggregates Pauli strings into blocks and arranges them into schedule layers. Although multiple layers may exist in general, the example shows a single layer with one block.
    }
    \label{fig:overview}
    \end{figure*}

Building on these motivations, we propose \sys, a compiler for 2-local Hamiltonian simulation on quantum chiplets. \sys introduces the {\em Pauli graph}, which captures qubit interactions in the Hamiltonian. An overview is shown in \Cref{fig:overview}. The compilation pipeline has five stages: (1) the {\em Pauli graph partitioner}, which assigns qubits to Pauli sets to minimize cross-set strings (\Cref{subsec:graph_partition}); and (2) the {\em set orchestrator}, which maps each Pauli set to a chiplet while minimizing inter-chiplet communication (\Cref{subsec:chiplet_mapping}). Both stages are formalized as constrained optimization problems and solved optimally via integer linear programming (ILP).
Next, the {\em Pauli block scheduler} aggregates Pauli strings within each set into Pauli blocks and arranges them into schedule layers, where multiple blocks run concurrently on chiplets using highways (\Cref{subsec:schedule}). Finally, the {\em intra-chiplet mapper} and {\em circuit generator} (\Cref{sec:circuit_generator}) synthesize these layers into circuits. \Cref{sec:discussion} explains the rationale behind this component ordering.

\subsection{Pauli Graph Partitioner}
\label{subsec:graph_partition}

A central challenge in compiling 2-local Hamiltonians is {\em qubit mapping}, which assigns each logical qubit to a physical qubit. While prior work~\cite{sabre, molavi2022qubit, murali2019noise} emphasizes its importance for monolithic architectures, the challenge is amplified in chiplet systems with long-distance interactions and heterogeneity. Long-range communication increases both circuit depth and gate count, and cross-chiplet couplings are noisier than intra-chiplet ones. MECH~\cite{highway} mitigates these costs using highways, but optimization challenges remain: highways achieve constant depth, yet their CNOT and measurement counts grow linearly with distance, they consume ancilla qubits, and long highways may block other communications. Existing solver-\cite{murali2019noise} and heuristic-based\cite{sabre} mapping methods either ignore heterogeneity and highways or fail to scale to large problems.

\sys employs a {\em hierarchical qubit mapping} strategy: first assigning qubits to chiplets, then mapping each chiplet’s qubits to its physical registers. Once placed, a qubit remains on its chiplet, reducing cross-chiplet routing and coupling. As shown in \Cref{fig:overview}, the {\em Pauli graph partitioner} and {\em set orchestrator} handle chiplet-level assignment, while the {\em intra-chiplet mapper} performs local mapping.

\begin{definition}[Pauli graph]
Given a Hamiltonian on $n$ qubits with Pauli strings $H_1, \ldots, H_m$, the \emph{Pauli graph} is the undirected graph $\mathcal{G} = \langle\mathcal{V}, \mathcal{E}\rangle$ where $\mathcal{V} = \{v_1, \ldots, v_n\}$ represents the qubits, and $(v_i, v_j) \in \mathcal{E}$ if and only if some $H_k$ acts non-trivially on both qubits $i$ and $j$.
\end{definition}

The Pauli graph partitioner groups qubits into sets by minimizing Pauli strings that span multiple sets. The {\em set orchestrator} then maps each set to a chiplet, placing communication-intensive sets on nearby chiplets to reduce inter-chiplet distance. \Cref{fig:overview} illustrates their cooperation. For example, with chiplets holding at most two qubits, the partitioner divides the input Hamiltonian $H$ into three sets. Since set 2 communicates with both sets 0 and 1 (which do not communicate with each other), \sys maps sets 0, 1, and 2 to chiplets 0, 2, and 1, respectively, to respect connectivity and minimize communication cost.

\paragraph{Problem formulation.}
Let $\mathcal{G} = \langle\mathcal{V}, \mathcal{E}\rangle$ be the Pauli graph of a Hamiltonian $H$ on $n$ qubits, $N_c$ the number of chiplets, and $N_q$ the maximum qubits per chiplet. \sys partitions $\mathcal{G}$ into $N_c$ subgraphs, each with at most $N_q$ nodes, while minimizing inter-subgraph edges. This Pauli graph partitioning is formulated as a constrained optimization problem and solved using binary integer linear programming (ILP) with an off-the-shelf solver. We next present the ILP formulation.

\paragraph{Variables and constraints.}
Let $X$ be a $|V| \times N_c$ binary matrix where $X_{ij} = 1$ iff node $v_i$ is assigned to set $j$. Similarly, let $Y$ be a $|E| \times N_c$ binary matrix where $Y_{ij} = 1$ iff edge $e_i = (v_k, v_l)$ lies entirely within set $j$ (i.e., both $v_k$ and $v_l$ are assigned to $j$).
The formulation enforces the following constraints: (1) each node is assigned to exactly one set, $\forall i \in \{0,\dots,n-1\}, \sum_{j=0}^{N_c-1} X_{ij} = 1$; (2) each set contains at most $N_q$ nodes, $\forall j \in \{0,\dots,N_c-1\}, \sum_{i=0}^{n-1} X_{ij} \le N_q$; and (3) $Y$ depends on $X$, with $Y_{ij} = X_{kj} \wedge X_{lj}$ for edge $e_i=(v_k,v_l)$. Since this relation is non-linear, \sys linearizes it as $Y_{ij} \le X_{kj}$, $Y_{ij} \le X_{lj}$, and $Y_{ij} \ge X_{kj} + X_{lj} - 1$.

\paragraph{Objective}
The optimization aims to maximize \[C_{QG}(X, Y) =  \sum_{i=0}^{|E|-1} \sum_{j=0}^{N_c-1} Y_{ij}\],
the total number of intra-set edges, i.e., edges whose endpoints are assigned to the same set.

\paragraph{Scalability analysis}
The formulation has $O(N_c n^2)$ variables and constraints. ILP solvers work well for small $n$, but quickly become intractable as $n$ grows. To handle larger instances, we implement a Pauli graph partitioner based on the multi-level graph partitioning algorithm in METIS \cite{karypis1997metis}. Our evaluation compares the performance of the ILP solver and the METIS heuristic.

\subsection{The Set Orchestrator}
\label{subsec:chiplet_mapping}
After partitioning the Pauli graph into sets, the {\em set orchestrator} assigns each set to a chiplet while minimizing inter-chiplet communication cost. \sys derives the optimal orchestration strategy via an ILP formulation.

\paragraph{Problem formulation.}
\sys models set orchestration as a quadratic assignment problem. Given $N_c$ Pauli sets $\{g_0, g_1, \dots, g_{N_c-1}\}$ and $N_c$ chiplets, let $W$ be the communication intensity matrix, where $W_{ij}$ is the communication between sets $g_i$ and $g_j$, and let $D$ be the distance matrix, where $D_{ij}$ is the shortest-path distance between chiplets $i$ and $j$ on the topology graph. The objective is to minimize $\sum_{i,j} W_{ij} D_{\pi(i)\pi(j)}$ over assignments $\pi$ of sets to chiplets. \sys reformulates this task as a binary ILP.

\paragraph{Variables and constraints.}
Let $A$ be a $|N_c| \times |N_c|$ binary matrix where $A_{ij} = 1$ iff set $g_i$ is assigned to chiplet $j$. The constraints are: (1) each set is assigned to exactly one chiplet, $\forall i \in \{0,\dots,N_c-1\}, \sum_{j=0}^{N_c-1} A_{ij} = 1$; and (2) each chiplet hosts exactly one set, $\forall j \in \{0,\dots,N_c-1\}, \sum_{i=0}^{N_c-1} A_{ij} = 1$.

\paragraph{Objective}
The objective minimizes the total communication cost: \[C_{CM}(A) = \tfrac{1}{2}\sum_{i=0}^{N_c-1}\sum_{j=0}^{N_c-1} (A^T W A \odot D)_{ij}\],
where $\odot$ denotes element-wise multiplication. Here, $W$ is the set-level communication matrix, and $A^T W A$ transforms it into the chiplet-level communication matrix. For example, if set $g_i$ is mapped to chiplet $k$ and $g_j$ to chiplet $l$, then $(A^T W A)_{kl} = W_{ij}$. Multiplying this matrix with $D$, the chiplet distance matrix, yields the communication cost between chiplets.

\sys linearizes the problem and solves it with an ILP solver. It introduces a $N_c \times N_c \times N_c \times N_c$ binary tensor $B$, where $B_{ijkl} = 1$ iff communication between sets $g_i$ and $g_j$ is mapped to communication between chiplets $k$ and $l$, i.e., $B_{ijkl} = A_{ik} \wedge A_{jl}$. The $\wedge$ is linearized with three constraints for all $i \ne j, k \ne l$: (1) $B_{ijkl} \le A_{ik}$, (2) $B_{ijkl} \le A_{jl}$, (3) $B_{ijkl} \ge A_{ik} + A_{jl} - 1$. The resulting objective is \[C_{CM}(A,B) = \tfrac{1}{2}\sum_{i=0}^{N_c-1}\sum_{j=0}^{N_c-1}\sum_{k=0}^{N_c-1}\sum_{l=0}^{N_c-1} W_{ij}D_{kl}B_{ijkl}\],
which is minimized by the ILP solver.

\subsection{Pauli Block Scheduler}
\label{subsec:schedule}

\begin{figure}
  \centering
  \subfloat[] {
    \includegraphics[width=0.27\textwidth]{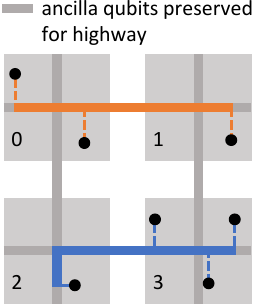}
    \label{fig:parallel_highway}
  }
  \quad
  \subfloat[] {
    \includegraphics[width=0.27\textwidth]{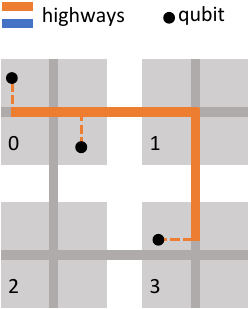}
    \label{fig:routing_chiplet}
  }
  \caption{(a) Parallel Pauli blocks. (b) Data chiplets (0 and 3) and path chiplet (1).}
\end{figure}
\begin{algorithm}[t]
    \caption{Construction of a single schedule layer.}
    {
        \begin{algorithmic}[1]
        \State \textbf{Input:} A logical qubit index to chiplet index mapping $M$, a topology graph for the chiplets $G$, a set of un-aggregated Pauli strings $S$, Pauli block size threshold $t$.
        \State \textbf{Output:} A schedule layer $L$, and the rest of un-aggregated Pauli strings.
        \State $L \leftarrow \Call{List}{\cdot}$
        \State $Q \leftarrow \Call{Queue}{\cdot}$
        \State $Q.\Call{Push}{(0, 1, \dots, N_c-1)}$ \Comment{$N_c$ is the number of chiplets.}
        \While{$!Q.\Call{empty}{ }$}
            \State $C = Q.$\Call{PopFront}{$\cdot$}
            \State $S_C \leftarrow \Call{Set}{\forall p \in S \:\textrm{where}\: M[p.q1] \in C \wedge M[p.q2] \in C}$.
            \State $S_{C\textrm{-cross}} \leftarrow \Call{Set}{\forall p \in S_C \:\textrm{where}\: M[p.q1] \ne M[p.q2]}$.
            \If{$|S_{C\textrm{-cross}}| = 0$} \label{line:intra_begin}
                \For{$c \in C$}
                    \State $S_c \leftarrow \{p|p\in C \wedge M[p.q1] = M[p.q2]=c\}$
                    \State $q_{\textrm{control}} \leftarrow$ the qubit involved in the most $p \in S_c$.
                    \State $S_{\textrm{controlled}} \leftarrow \{p|p \in S_c \wedge q_{\textrm{control}} \in \{p.q1, p.q2\}\}$
                    \State $S_{\textrm{targets}} \leftarrow \{q | \exists p \in S_c, q\in \{p.q1, p.q2\}\wedge q \ne  q_{\textrm{control}}\} $
                    \If{$|S_{\textrm{targets}}| > t$}
                        \State $B \leftarrow \Call{MakePauliBlock}{q_{\textrm{control}}, S_{\textrm{targets}}, S_{\textrm{controlled}}}$
                        \State $S = S - S_{\textrm{controlled}}$ \label{line:intra_end}
                    \EndIf
                \EndFor
            \Else
            \State $q_{\textrm{control}} \leftarrow$ the qubit involved in the most $p \in S_{C\textrm{-cross}}$. \label{line:cross_begin}
            \State $S_{\textrm{controlled}} \leftarrow \{p|p \in S_c \wedge q_{\textrm{control}} \in \{p.q1, p.q2\}\}$
            \State $S_{\textrm{targets}} \leftarrow \{q | \exists p \in S_c, q\in \{p.q1, p.q2\}\wedge q \ne  q_{\textrm{control}}\} $
            \State $B \leftarrow \Call{MakePauliBlock}{q_{\textrm{control}}, S_{\textrm{targets}}, S_{\textrm{controlled}}}$
            \State $L.\Call{append}{B}$
            \State $S = S - S_{\textrm{controlled}}$
            \State $C' \leftarrow C - (B.\textrm{path\_chiplets} \cup B.\textrm{data\_chiplets})$
            \State $G_{C'} \leftarrow G.\Call{SubGraph}{C'}$
            \For{$C'' \in \Call{ConnectedComponents}{G_{C'}}$}
                \State $Q.\Call{Push}{C''}$
            \EndFor \label{line:cross_end}
            \EndIf
        \EndWhile
        \State \textbf{Return:} $L$, $S$
        \end{algorithmic}}
        \label{alg:schedule}
\end{algorithm}

The {\em Pauli block scheduler} aggregates Pauli strings into Pauli blocks and arranges them into {\em schedule layers}, where each layer contains blocks that can run concurrently on chiplets using highways (\Cref{fig:parallel_highway}). For simplicity, \sys permits two blocks to execute in parallel only if they use disjoint chiplets, avoiding the need to track intra-chiplet qubit positions and preventing intersecting highways. Each block’s highway involves two chiplet types: {\em data chiplets}, which hold the block’s qubits, and {\em path chiplets}, intermediate chiplets required to connect disjoint data chiplets. For example, in \Cref{fig:routing_chiplet}, data chiplets 0 and 3 are connected via path chiplet 1. Since multiple connection paths may exist, \sys selects one minimizing the number of path chiplets.
This task corresponds to the Steiner Tree problem, and \sys addresses it by adopting the Mehlhorn algorithm~\cite{mehlhorn1988faster}.

The scheduler aims to (1) minimize the number of schedule layers, reducing overall circuit depth, and (2) decide whether each Pauli block should use a highway or local routing. Highways are preferable when (a) a block spans multiple chiplets—since \sys forbids cross-chiplet routing without highways—or (b) a block within one chiplet exceeds a size threshold, making highways more efficient than local routing. For small blocks (e.g., two qubits), the cost of constructing a highway can outweigh its benefit, so these are executed locally.

\Cref{alg:schedule} outlines how \sys constructs a schedule layer.
The scheduler maintains a queue $Q$ of unoccupied connected components of the chiplet topology graph $G$. In each iteration, it processes one component $C$, aggregating its Pauli strings into blocks until $Q$ is empty. Aggregated blocks are placed in the schedule layer and executed via highways.
During aggregation, \sys's scheduler utilizes highways for 1) cross-chiplet Pauli blocks and 2) intra-chiplet Pauli blocks whose number of target qubits exceeds a threshold $t$.
For each component $C$, the scheduler first collects all cross-chiplet Pauli strings $S_{C\text{-cross}}$. If $S_{C\text{-cross}} \neq \emptyset$ (line \ref{line:cross_begin}-\ref{line:cross_end} in \Cref{alg:schedule}), it selects the control qubit with the most incident strings, aggregates those strings into a Pauli block $B$, and updates $Q$ by removing occupied chiplets from $C$ to form a new subgraph $G_{C’}$. Otherwise, if $S_{C\text{-cross}} = \emptyset$ (lines \ref{line:intra_begin}-\ref{line:intra_end} in \Cref{alg:schedule}), each chiplet in $C$ is processed independently: the largest Pauli block on that chiplet is identified, and if it exceeds a threshold $t$ (where highways become more efficient than local routing), it is aggregated; otherwise, the chiplet is skipped.
The time complexity of constructing a schedule layer is $O(N_c m)$, where $m$ is the number of Pauli strings in the Hamiltonian.

\sys’s scheduler iteratively builds schedule layers (\Cref{alg:schedule}) until no cross-chiplet Pauli strings remain. As shown in \Cref{fig:overview}, after scheduling, some {\em local} Pauli strings may be left outside any layer. Strings acting on the same qubits are aggregated into local Pauli blocks (e.g., $X_3X_1$, $Y_3Y_1$, and $Z_3Z_1$) and executed via local routing.

\subsection{Circuit Generator}
\label{sec:circuit_generator}

As the final step, the {\em circuit generator} synthesizes schedule layers and local Pauli blocks into a quantum circuit, coordinating them to minimize depth. Circuit synthesis proceeds in five passes, described below.

\paragraph{Intra-chiplet mapper.}

Because data qubits move frequently when implementing schedule layers, the effect of intra-chiplet mapping quickly diminishes. \sys therefore optimizes only for short-term benefits: placing the qubits of each local Pauli block close together so entanglement can occur early. For this, \sys applies an off-the-shelf mapper—specifically, the SABRE mapper \cite{sabre} in Qiskit \cite{qiskit}.

\paragraph{Schedule layer reordering}

Since exponentials of Pauli strings commute, schedule layers also commute. \sys’s circuit generator reorders them dynamically, selecting at each step the layer whose data qubits have the smallest average distance to the nearest ancilla qubit.

\paragraph{Generating highway-assisted circuits}

To generate a highway-assisted circuit for a Pauli block, \sys first selects the {\em control} and {\em target entrances}—the highway qubits that couple with the block’s control and target qubits. Entrances are chosen to minimize routing cost: for each, \sys scans all ancilla qubits and picks the one with the earliest {\em available time} (when its last entanglement finishes), updating availability after each assignment. The control entrance is chosen first, then targets in order of increasing distance to ancillas.
Next, \sys determines the highway path. Since data and path chiplets are fixed by the Pauli block scheduler, the circuit generator selects ancillas on these chiplets to form a path, reducing the problem to a Steiner Tree, solved with the Mehlhorn algorithm~\cite{mehlhorn1988faster}. Once the path is set, the Pauli block is implemented: \sys builds the highway (\Cref{subsec:highway}), entangles control and target qubits with their entrances, and reuses the highway within a Pauli set (\Cref{thm:1}). If a $ZZ$ Pauli set is followed by a $YY$ set, the highway is further reused across them (\Cref{thm:2}).

\paragraph{Coordinating local Pauli strings and Pauli blocks}

In each schedule layer, only part of the data qubits are used, allowing concurrent execution of local Pauli blocks on idle qubits. After each layer, \sys performs routing and entanglement for these blocks, but only within the current circuit depth; remaining blocks are deferred to later rounds. For efficiency, \sys applies a {\em permutation-aware} routing strategy, similar to 2QAN~\cite{2qan}, which prioritizes local blocks whose control and target qubits are nearby.

\subsection{Order of the Components}
\label{sec:discussion}

The Pauli block scheduler is placed between the set orchestrator and the circuit generator. This ensures both feasibility and effectiveness. First, whether a Pauli string should use a highway or local routing can only be determined after higher-level mapping decisions. Second, placing the scheduler before the set orchestrator could yield infeasible schedules: blocks grouped as non-conflicting based only on shared qubits may later conflict after mapping, for example through intersecting highway paths. By scheduling after mapping, \sys guarantees that non-conflict constraints are preserved.

\section{Evaluation}

\begin{table}[t]
\centering
\caption{Backend specifications.}
\label{tab:backend}
\setlength{\tabcolsep}{1.1mm}{\fontsize{9}{10}\selectfont
\begin{tabular}{l|c|c|c|c}
\hline
\textbf{Name} & \textbf{\# Data qubit} & \textbf{\# Total qubit} & \textbf{Chiplet size} & \textbf{ Chiplet array} \\
\hline
Backend\_1 & 324 & 408 & 136 & $1\times 3$ \\
Backend\_2 & 440 & 560 & 140 & $2\times 2$ \\
Backend\_3 & 990 & 1260 & 140 & $3\times 3$ \\
\hline
\end{tabular}
}
\end{table}
\begin{figure}
    \centering
    \includegraphics[scale=0.5]{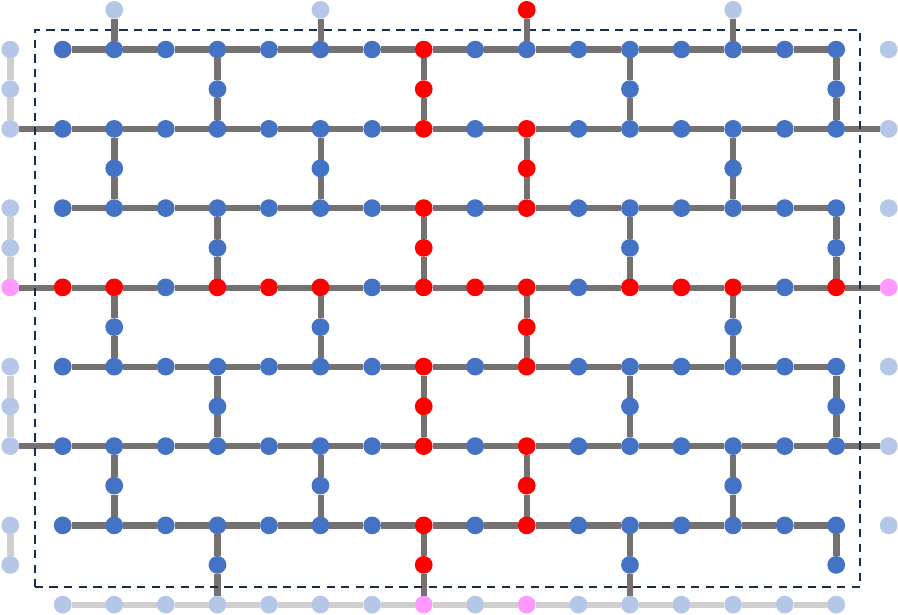}
    \caption{The internal topology of a chplet and cross-chiplet couplings. Red nodes are ancilla qubits preserved for highway. Light-colored nodes are qubits on neigbbor chiplets.}
    \label{fig:chiplet_internal}
    \end{figure}

\begin{figure*}
    \centering
      \subfloat[Compilation time of different compilers.] {
        \includegraphics[width=0.6 \textwidth]{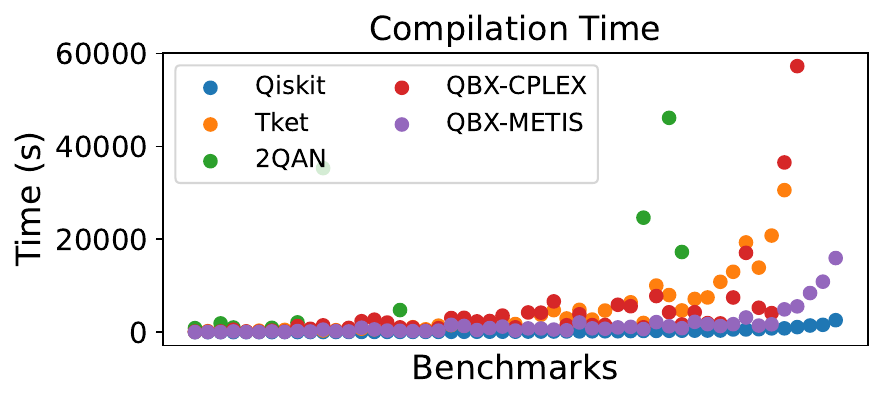}
        \label{fig:time_ratio}
      }
      \\
      \subfloat[Timing of components in \sys-CPLEX.] {
        \includegraphics[width=0.4\textwidth]{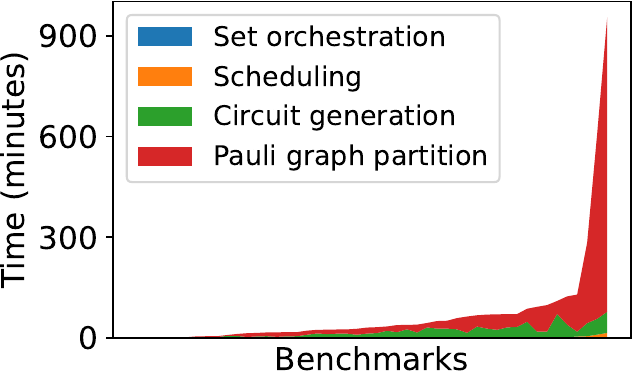}
        \label{fig:time_cplex}
      }
      \quad
      \subfloat[Timing of components in \sys-METIS.] {
        \includegraphics[width=0.4\textwidth]{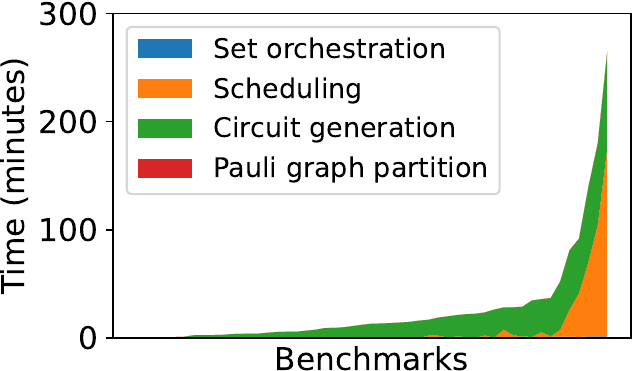}
        \label{fig:time_metis}
      }
    \caption{The running time of different compilers and different components in \sys-CPLEX and \sys-METIS. }
    \end{figure*}

\begin{figure}
\centering
      \subfloat[] {
        \includegraphics[width=0.36\textwidth]{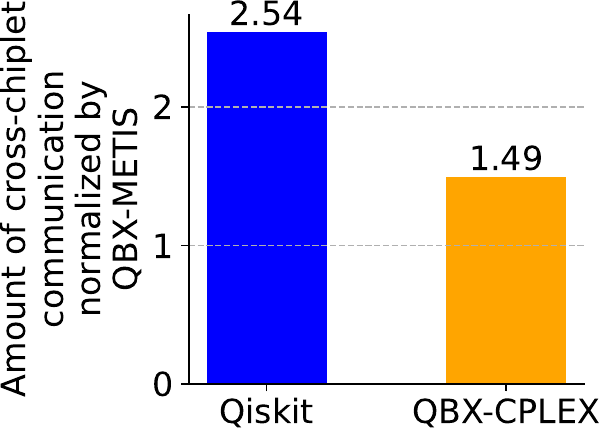}
        \label{fig:comm_amount}
      }
      \subfloat[] {
        \includegraphics[width=0.36\textwidth]{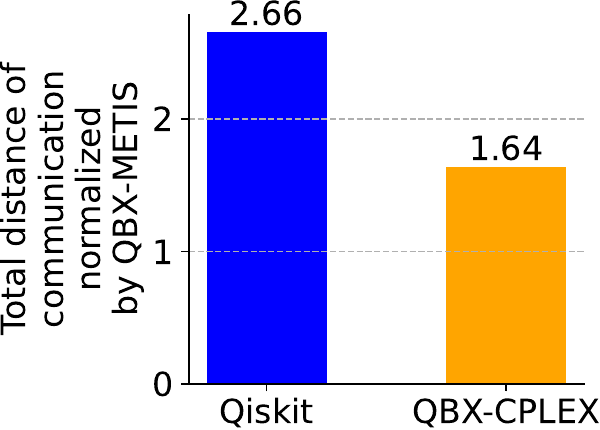}
        \label{fig:comm_dist}
      }
    \caption{Ablation study on \sys's hierarchical mapper.}
    \end{figure}

\begin{figure}
    \centering
      \subfloat[Depth (normalized by Qiskit).] {
        \includegraphics[width=0.4\textwidth]{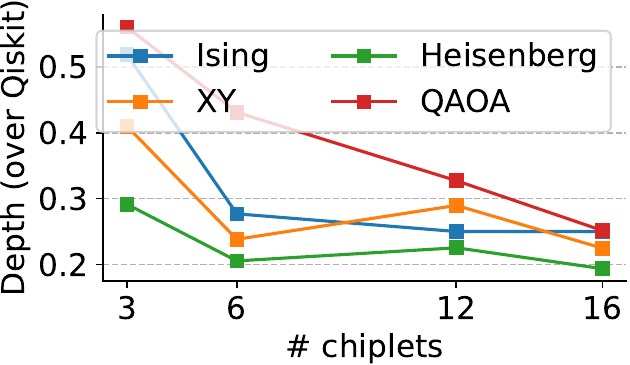}
        \label{fig:scalability_depth}
      }
      \subfloat[Efficient CNOT count (normalized by Qiskit).] {
        \includegraphics[width=0.4\textwidth]{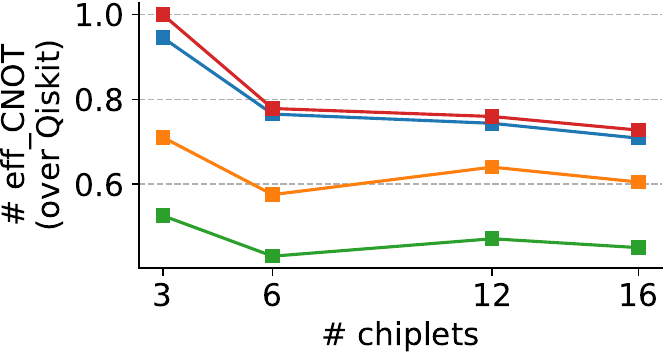}
        \label{fig:scalability_eff_cx}
      }
    \caption{sensitivity study on the influence of the number of chiplets to \sys-METIS's performance.}
    \end{figure}

\begin{table*}[ht]
\centering
\caption{Comparing \sys and existing compilers on the depth and number of effective CNOTs in the output circuits. In the name of the circuits, "Q" refers to QAOA \cite{qaoa}, "H" refers to the Heisenberg model, "reg" refers to regular graph and "rand" refers to ER graph \cite{erdds1959random}.  Best results are bolded. ``-'' indicates that compilation didn't finish in 24 hours for the target task. }
\label{tab:main_results}
\setlength{\tabcolsep}{1.1mm}{\fontsize{7}{8}\selectfont
\begin{tabular}{l|r||r|r|r|r|r|r||r|r|r|r}
\hline
\multirow{2}{5em}{ \textbf{Circuit} } & \multirow{2}{3em}{\bf \# Pauli Strings} & \multicolumn{2}{c|}{\bf Qiskit} & \multicolumn{2}{c|}{\bf Tket} & \multicolumn{2}{c||}{\bf 2QAN} & \multicolumn{2}{c|}{\bf \sys-CPLEX} & \multicolumn{2}{c}{\bf \sys-METIS}\\
\cline{3-12}
& & {\bf depth} & {\bf  eff\_CXs} & {\bf depth} & {\bf  eff\_CXs} &{\bf depth} & {\bf  eff\_CXs} &{\bf depth} & {\bf  eff\_CXs} &{\bf depth} & {\bf  eff\_CXs} \\
\hline
Q\_reg\_324\_16 & 2592 & 6696.7 & 74456.5 & 11851 & 97743.4 & 5906 & \textbf{58960.4} & 4231 & 74742.8 & \textbf{3976} & 75364.7\\
Q\_reg\_324\_32 & 5184 & 10344 & 116877.1 & 18649 & 141084.8 & - & - & \textbf{5105} & \textbf{106702.8} & 5412 & 109509\\
Q\_reg\_324\_64 & 10368 & 14165.7 & 166510.6 & 26140 & 216255.4 & - & - & 6968 & 161782.3 & \textbf{6382} & \textbf{160600.4}\\
Q\_reg\_324\_128 & 20736 & 19517 & \textbf{241406.2} & 42995 & 326729 & - & - & 10711 & 262621 & \textbf{10181} & 257287\\
Q\_rand\_324\_0.1 & 5171 & 10825.7 & 118867.6 & 26539 & 184604.6 & - & - & 4984 & \textbf{106469.5} & \textbf{4815} & 107160.5\\
Q\_rand\_324\_0.2 & 10397 & 14885 & 170675.5 & 27269 & 203604.2 & - & - & 6540 & 162269.5 & \textbf{6475} & \textbf{161673.3}\\
Q\_rand\_324\_0.3 & 15738 & 17142.7 & \textbf{208160.1} & 35001 & 270025.6 & - & - & 8522 & 213589.5 & \textbf{8131} & 210007.9\\
Q\_rand\_324\_0.4 & 20908 & 18504 & \textbf{237955.6} & - & - & - & - & 10656 & 266345.7 & \textbf{10108} & 258192\\
Ising\_1D\_324 & 645 & 284.3 & 6625.6 & 2025 & 18732 & 483 & \textbf{6511.2} & \textbf{138} & 7530.2 & 141 & 7509.3\\
Ising\_2D\_324 & 1646 & 1140 & 22188.6 & 8093 & 67933.2 & \textbf{411} & \textbf{10951.8} & 560 & 23667.3 & 560 & 23667.3\\
Ising\_3D\_324 & 2074& 1383.3 & 28567.9 & 8304 & 78441.2 & \textbf{462} & \textbf{13868.2} & 630 & 27575.9 & 630 & 27575.9\\
XY\_1D\_324 & 1290 & 477 & 11007.5 & 3366 & 31824.4 & - & - & 195 & \textbf{9450.4} & \textbf{178} & 9676.8\\
XY\_2D\_324 & 3292 & 2528.3 & 43004.1 & 15247 & 129304.8 & - & - & \textbf{1002} & \textbf{32314.6} & \textbf{1002} & \textbf{32314.6}\\
XY\_3D\_324 & 4148 & 3312.3 & 55656.7 & 15365 & 135203.6 & - & - & \textbf{1233} & \textbf{39117.8} & 1385 & 39407.3\\
H\_1D\_324 & 1935 & 626.3 & 14658.7 & 4296 & 40872.6 & - & - & \textbf{209} & \textbf{10710.4} & 232 & 10788.6\\
H\_2D\_324 & 4938 & 3848.7 & 63108.1 & 18599 & 163259 & - & - & 1262 & \textbf{35141} & \textbf{1130} & 35468.6\\
H\_3D\_324 & 6222 & 5389 & 83090 & 22624 & 179375.4 & - & - & \textbf{1398} & \textbf{44279.2} & \textbf{1398} & \textbf{44279.2}\\
Q\_reg\_440\_16 & 3520& 10095.3 & 135175.1 & 18610 & 157747.8 & - & - & 5764 & 110480.1 & \textbf{5341} & \textbf{109656.7}\\
Q\_reg\_440\_32 & 7040 & 14106 & 188633.5 & 29774 & 274279.8 & - & - & 8229 & \textbf{157884.4} & \textbf{7444} & 159773\\
Q\_reg\_440\_64 & 14080 & 19400.3 & 267714.2 & 50182 & 429129 & - & - & \textbf{9560} & 233224.2 & 10474 & \textbf{226461.7}\\
Q\_reg\_440\_128 & 28160 & 32101.3 & 387831.8 & - & - & - & - & 15184 & \textbf{359139.8} & \textbf{14601} & 370681.7\\
Q\_rand\_440\_0.1 & 9599 & 16894 & 215377.3 & 37226 & 313570.8 & - & - & 9004 & \textbf{189508.1} & \textbf{8795} & 196354\\
Q\_rand\_440\_0.2 & 19241 & 23522 & 323137.7 & 54480 & 461640.8 & - & - & \textbf{11349} & \textbf{276132.6} & 12817 & 295009.5\\
Q\_rand\_440\_0.3 & 28945& 30546 & 396209.1 & 65437 & 569381 & - & - & 15519 & \textbf{372179.3} & \textbf{14253} & 378858\\
Q\_rand\_440\_0.4 & 38551 & 44233.7 & \textbf{466962.4} & 78995 & 618670 & - & - & 18313 & 478256.7 & \textbf{16311} & 472648.7\\
Ising\_1D\_440 & 877 & 343.3 & 11178.4 & 2178 & 29424.2 & 294 & \textbf{8408.8} & \textbf{120} & 9937.6 & 123 & 9617.5\\
Ising\_2D\_440 & 2177 & 1398.3 & 31987.5 & 9347 & 91196.8 & \textbf{408} & \textbf{15151.4} & 742 & 33242.6 & 554 & 29127.4\\
Ising\_3D\_440 & 3630 & 4116 & 75411.3 & 18823 & 190116.2 & \textbf{902} & \textbf{28643} & 2477 & 62915.3 & 1322 & 55827.2\\
XY\_1D\_440 & 1754 & 538.7 & 17535.1 & 4048 & 44986.8 & - & - & 186 & 12589.4 & \textbf{174} & \textbf{12381}\\
XY\_2D\_440 & 4354& 2800.7 & 57652.7 & 16028 & 166735.8 & - & - & \textbf{918} & 41551.8 & 933 & \textbf{39605.4}\\
XY\_3D\_440 & 7260 & 8888.7 & 148055.5 & 32957 & 291370.2 & - & - & 5252 & 103512.1 & \textbf{2660} & \textbf{80286.8}\\
H\_1D\_440 & 2631 & 881.3 & 23500 & 6848 & 67786.4 & - & - & \textbf{200} & \textbf{14396.4} & 211 & 14411.8\\
H\_2D\_440 & 6531 & 4219 & 87748.6 & 21029 & 209251.4 & - & - & 2051 & 54801.5 & \textbf{1066} & \textbf{43483.6}\\
H\_3D\_440 & 10890 & 12289 & 220924.5 & 40608 & 385503 & - & - & \textbf{2324} & 90806.2 & 3221 & \textbf{87353}\\
Q\_reg\_990\_16 & 7920 & 32332 & 518366.2 & 48584 & 568069.8 & - & - & 13504 & 379438 & \textbf{11269} & \textbf{364683}\\
Q\_reg\_990\_32 & 15840 & 41814 & 781820.5 & 87103 & 945812.6 & - & - & 20042 & \textbf{552707.1} & \textbf{18320} & 559213.2\\
Q\_reg\_990\_64 & 31680 & 50969 & 1081802.1 & 126317 & 1458360.2 & - & - & 30561 & \textbf{804010.2} & \textbf{28359} & 804916.9\\
Q\_reg\_990\_128 & 63360 & 76780.3 & 1531645.9 & - & - & - & - & 41139 & \textbf{1158810.5} & \textbf{39835} & 1170668.1\\
Q\_rand\_990\_0.1 & 48891 & 67074 & 1324073 & 165355 & 1842941.8 & - & - & 36087 & \textbf{1013174.5} & \textbf{35317} & 1022345.7\\
Q\_rand\_990\_0.2 & 97668 & 99520.3 & 1938116.7 & - & - & - & - & - & - & \textbf{50082} & \textbf{1519511.4}\\
Q\_rand\_990\_0.3 & 146631 & 123620.3 & 2395452.1 & - & - & - & - & - & - & \textbf{57177} & \textbf{1950257.6}\\
Q\_rand\_990\_0.4 & 195475 & 156736.7 & 2759370.5 & - & - & - & - & - & - & \textbf{57227} & \textbf{2449045.9}\\
Ising\_1D\_990 & 1977 & 744.3 & 38776.2 & 6028 & 80693.4 & 519 & 26432.6 & \textbf{209} & \textbf{24901.4} & 551 & 39109.8\\
Ising\_2D\_990 & 4927 & 3104.7 & 99440.3 & 27005 & 335886 & 675 & \textbf{41640.8} & 6253 & 208746.9 & \textbf{525} & 64550.5\\
Ising\_3D\_990 & 8227 & 7553.7 & 206696.3 & 50820 & 610385.4 & \textbf{1488} & \textbf{75907.6} & 12175 & 325066.5 & 1627 & 127660.6\\
XY\_1D\_990 & 3954 & 1272.3 & 60105.4 & 9261 & 135214 & - & - & \textbf{280} & \textbf{33699} & 1015 & 58451.8\\
XY\_2D\_990 & 9856 & 4994.3 & 173907.5 & 37896 & 491354.8 & - & - & 13832 & 395610.8 & \textbf{870} & \textbf{89323.2}\\
XY\_3D\_990 & 16454 & 12833.7 & 388667.1 & 77349 & 866650.8 & - & - & 24641 & 617217.2 & \textbf{3045} & \textbf{189103.6}\\
H\_1D\_990 & 5931& 1852.3 & 76722 & 12558 & 189001.2 & - & - & 1324 & \textbf{59791.8} & \textbf{1087} & 62771.2\\
H\_2D\_990 & 14781& 8691.3 & 251162.9 & 52237 & 664488.2 & - & - & 14516 & 398414.6 & \textbf{981} & \textbf{99199}\\
H\_3D\_990 & 24681& 18844 & 568566.5 & 99814 & 1189912.4 & - & - & 27081 & 639511.9 & \textbf{3372} & \textbf{206679.1}\\
\hline
{\bf \begin{tabular}{@{}l@{}}Geo.Mean vs. \\ \sys-METIS \end{tabular}}  & &  2.57 & 1.30 & 10.35 & 2.62 & 1.13 & 0.63 & 1.36 & 1.16 & 1.00 & 1.00 \\
\hline
\end{tabular}
}
\end{table*}

\begin{table}[ht]
\centering
\caption{Comparing \sys and MECH on the depth and number of effective CNOTs in the output circuits.
}
\label{tab:mech}
\setlength{\tabcolsep}{1.1mm}{\fontsize{8.5}{9.5}\selectfont
\begin{tabular}{l|r|r|r|r|r|r}
\hline
\multirow{2}{5em}{ \textbf{Circuit} } &  \multicolumn{2}{c|}{\bf MECH} & \multicolumn{2}{c|}{\bf \sys-CPLEX} & \multicolumn{2}{c}{\bf \sys-METIS}\\
\cline{2-7}
& {\bf depth} & {\bf  eff\_CXs} & {\bf depth} & {\bf  eff\_CXs} &{\bf depth} & {\bf  eff\_CXs} \\
\hline
Q\_reg\_324\_16 & 9592 & 82558.5 & \textbf{3768} & \textbf{73009.2} & 4206 & 80487.9 \\
Q\_reg\_324\_32 & 10417 & 120722.8 & \textbf{5289} & \textbf{105531.6} & 5905 & 123502 \\
Q\_reg\_324\_64 & 11358 & 186580.1 & 7094 & \textbf{160776} & \textbf{6892} & 160909.5 \\
Q\_rand\_324\_0.1 & 10216 & 120624.5 & \textbf{5346} & \textbf{109450.6} & 5574 & 118617.9 \\
Q\_rand\_324\_0.2 & 11375 & 185430.6 & 7058 & 161788 & \textbf{6864} & \textbf{161629.3} \\
Q\_rand\_324\_0.3 & 11921 & 242862 & \textbf{8808} & \textbf{213045.2} & 9367 & 217149 \\
Q\_reg\_440\_16 & 12825 & 126190.7 & 5494 & \textbf{120260.4} & \textbf{5398} & 122220.4 \\
Q\_reg\_440\_32 & 14287 & 184032.2 & \textbf{7622} & \textbf{173367.5} & 7664 & 185896.6 \\
Q\_reg\_440\_64 & 15290 & 276277.2 & 10421 & \textbf{238400.1} & \textbf{9313} & 239063.1 \\
Q\_rand\_440\_0.1 & 14584 & 218050 & 8984 & \textbf{207053.4} & \textbf{8894} & 215363.7 \\
Q\_rand\_440\_0.2 & 15911 & 338972.4 & 12165 & \textbf{288578.9} & \textbf{10985} & 300117 \\
Q\_rand\_440\_0.3 & 16894 & 442429.3 & 13919 & \textbf{373369.6} & \textbf{13491} & 374030.3 \\
\hline
{\bf \begin{tabular}{@{}l@{}}Geo.Mean \\ Ratio to \\ \sys-METIS \end{tabular}}  & 1.69 &  1.08 & 1.00 & 0.96 & 1.00 & 1.00 \\
\hline
\end{tabular}
}
\end{table}
\subsection{Experimental Setup}

\paragraph{Benchmarks}

We evaluate \sys on four 2-local Hamiltonian simulation scenarios: QAOA~\cite{qaoa}, the transverse Ising model~\cite{cipra1987introduction, newell1953theory}, the XY model~\cite{babaev1999nonperturbative, ohta1979xy}, and the Heisenberg model~\cite{heisenberg1985theorie}. For QAOA, we use MAX-CUT Hamiltonians on random $d$-regular graphs (each node has degree $d$) and Erdős–Rényi graphs $G(n,p)$ (a graph of $n$ nodes where each edge appears with probability $p$)~\cite{erdds1959random}. For the physics models, we construct Hamiltonians on 1D, 2D, and 3D lattices with both nearest-neighbor (NN) and next-nearest-neighbor (NNN) interactions. In all benchmarks, Pauli strings are randomly shuffled.

\paragraph{Backends}

\Cref{tab:backend} summarizes the backend specifications, and \Cref{fig:chiplet_internal} shows the topology of a 140-qubit chiplet. All three backends use the heavy-hex layout, widely adopted in superconducting chips for its low error rates~\cite{riel2022quantum}. The smallest backend (408 qubits) appears on IBM’s quantum computing roadmap~\cite{ibm_roadmap}. While our evaluation focuses on heavy-hex, \sys supports arbitrary layouts and flexible ancilla configurations. Unlike MECH, which restricts ancilla assignment to two structured patterns (partially or fully interleaved), \sys allows any mix of consecutive and interleaved assignments. However, ancilla assignment is not yet automated, leaving automation as an open research problem.

\paragraph{Baselines}

We evaluate \sys against several state-of-the-art compilers. Qiskit~\cite{qiskit} and \tket~\cite{tket} serve as general-purpose industry baselines: Qiskit uses transpile at optimization level 3 with three random seeds (averaged), while \tket applies the FullPeepholeOptimization pass with LinePlacement for mapping and LexiRouteRoutingMethod for routing. We also include MECH~\cite{highway}, a chiplet-oriented compiler with highway optimization, and two domain-specific Hamiltonian compilers: Paulihedral~\cite{paulihedral} and 2QAN~\cite{2qan}. For 2QAN, since its tabu search mapper~\cite{tabu} times out, we follow the authors’ recommendation and use Qiskit’s mapper for initial placement.

\paragraph{Metrics}

We evaluate using two metrics: circuit depth and {\em effective CNOT count}. Circuit depth includes only CNOT gates and measurement–reset operations, excluding single-qubit gates and classical communication due to negligible latency. A measurement–reset has depth 2, consistent with device data~\cite{ibm} and prior work~\cite{highway}.
Effective CNOT count~\cite{highway} directly reflects circuit infidelity:$\#CNOT_{\text{eff}} = \#CNOT_{\text{on-chip}} + \tfrac{p_{\text{cross}}}{p_{\text{on}}} \cdot \#CNOT_{\text{cross-chip}} + \tfrac{p_{\text{meas}}}{p_{\text{on}}} \cdot \#\text{measurement}$,
where $p_{\text{on}}$, $p_{\text{cross}}$, $p_{\text{meas}}$ are error rates of on-chip CNOT, cross-chip CNOT, and measurement, respectively. This metric normalizes all error contributions to the infidelity of an on-chip CNOT. We adopt $p_{\text{cross}}/p_{\text{on}} = 7.4$ and $p_{\text{meas}}/p_{\text{on}} = 0.7$ based on prior studies~\cite{gold2021entanglement, highway, kandala2021high} and calibration data~\cite{ibm}.

All experiments ran on a server with dual AMD EPYC 7513 32-core, 64-thread processors and 512 GB DRAM. Baseline compilers were evaluated single-threaded. \sys’s ILP solver used 16 threads, while the METIS-based Pauli graph partitioner ran single-threaded. Each tool was given a 24-hour timeout per circuit.

\subsection{Comparison}
Experiments are conducted to compare \sys with other compilers across 51 circuits. For \sys, we present results using both an ILP solver (specifically, the CPLEX solver), referred to as \sys-CPLEX, and the METIS algorithm for the Pauli graph partitioner, denoted as \sys-METIS. The solver's time limit in the Pauli graph partitioner is set to 10 minutes. All compiled circuits conform to the gate set $\{CNOT, U\}$, with classical communications and measurement-and-reset operations permitted. The results are detailed in \Cref{tab:main_results}.

Compared to generic compilers, \sys achieves substantial gains. \sys-METIS reduces circuit depth by 2.57× over Qiskit~\cite{qiskit} and 10.35× over \tket~\cite{tket}, while lowering effective CNOT count by 1.30× and 2.62×, respectively. Generic compilers miss optimization opportunities by ignoring Pauli string commutativity and further increase errors by neglecting heterogeneity from cross-chiplet couplings.

We also compare \sys with domain-specific compilers, 2QAN~\cite{2qan} and Paulihedral~\cite{paulihedral}. Paulihedral failed on all benchmarks due to its recursive qubit routing, which exceeds Python’s maximum recursion depth on large distances and causes segmentation faults when the limit is reset. 2QAN timed out on most benchmarks; in those completed, \sys achieved 1.13× lower depth, while 2QAN produced fewer effective CNOTs by applying unitary synthesis. As shown in~\cite{vatan2004optimal}, any 2-qubit unitary can be synthesized with $\le$ 3 CNOTs. 2QAN exploits this by merging 2-local Pauli strings and SWAPs into unitaries and then applying synthesis, an optimization orthogonal to \sys that could be incorporated into its circuit generator.
The frequent timeouts of 2QAN highlight its scalability limits, stemming from $O(m^2n)$ routing and $O(m^4)$ scheduling complexity (with $m$ Pauli strings and $n$ qubits)~\cite{2qan}. By contrast, \sys scales better: its ILP-based partitioner and orchestrator allow search timeouts to produce good (not exhaustive) solutions, while heuristics such as METIS offer fast approximations. Its greedy scheduler runs in $O(N_c m n)$, where $N_c$ is the number of chiplets, demonstrating substantially improved scalability over 2QAN.

The comparison with MECH~\cite{highway} is in \Cref{tab:mech}. MECH is evaluated separately because it requires both the row and column counts in a heavy-hex layout (\Cref{fig:chiplet_internal}) to be multiples of four. Thus, we use a slightly different backend topology from Backend\_1–3: each chiplet has a heavy-hex layout with 16 qubits per row and 16 rows total. The chiplet arrays for the 324- and 440-qubit circuits are $1\times3$ and $2\times2$, respectively.

As shown in \Cref{tab:mech}, \sys outperforms MECH in both circuit depth and effective CNOT count. MECH’s depth is 1.69× higher and its effective CNOT count 1.08× higher than \sys-METIS; compared to \sys-CPLEX, the gaps are 1.69× and 1.12×, respectively. \sys’s advantage stems from additional optimizations—Pauli graph partitioning and Pauli set orchestration—that improve qubit mapping by localizing interactions and shortening cross-chiplet communication. Highway reuse further boosts performance, especially for QAOA circuits.

\Cref{tab:main_results} compares \sys-CPLEX and \sys-METIS. For small problem sizes (few qubits and Pauli strings), \sys-CPLEX outperforms \sys-METIS, as seen in the three 1D Ising models, since solvers can find optimal solutions more effectively on small instances. However, this advantage fades at larger scales due to the limited scalability of the ILP approach.

\Cref{fig:time_ratio} reports compilation times of all tools across completed benchmarks. On average, Qiskit is 5.06× faster than \sys-METIS, while \tket is 2.92× faster. As discussed earlier, 2QAN is substantially slower, averaging 19.54× slower than \sys-METIS, underscoring \sys-METIS’s superior scalability over both domain-specific and general-purpose compilers. \sys-CPLEX, however, is 3.16× slower than \sys-METIS due to extensive preprocessing in the Pauli graph partitioner, despite a 10-minute solver limit. This preprocessing, which performs essential ILP reductions, is unavoidable and dominates runtime as benchmarks scale. \Cref{fig:time_cplex} details component-wise times in \sys-CPLEX, showing that preprocessing increasingly outweighs other stages. In contrast, \Cref{fig:time_metis} shows that replacing solver-based partitioning with the METIS algorithm drastically reduces partitioning time. Both figures also confirm that the set orchestrator, another solver-based step, consumes minimal time compared to other components.

\subsection{Ablation Study and Sensitivity Analysis}

\paragraph{Ablation study on \sys's Pauli graph partitioner and set orchestrator}

We compare qubit-to-chiplet assignments from \sys-CPLEX and \sys-METIS against Qiskit’s SABRE mapper~\cite{sabre} in terms of cross-chiplet communication amount and distance. As shown in \Cref{fig:comm_amount,fig:comm_dist}, Qiskit yields 2.54× more cross-chiplet communications and 2.66× greater total distance than \sys-METIS, and also exceeds \sys-CPLEX. This demonstrates that \sys’s partitioner and orchestrator produce more optimized assignments. The higher communication in \sys-CPLEX relative to \sys-METIS stems from the limited scalability of the solver-based approach.

\paragraph{Sensitivity to the number of chiplets}

Four devices with varying chiplet counts (3, 6, 12, 16) but identical chiplet sizes and intra-chiplet layouts were used to access \sys-METIS's sensitivity to the number of chiplets. The chiplet arrays for these configurations are \(1 \times 3\), \(2 \times 3\), \(3 \times 4\), and \(4 \times 4\), respectively. Circuit depth and effective CNOT count served as the evaluation metrics. The results, normalized against Qiskit's performance on the same benchmarks, are presented in \Cref{fig:scalability_depth} and \Cref{fig:scalability_eff_cx}. As depicted in these figures, \sys-METIS shows significant improvements across various benchmarks, with the performance gap between \sys-METIS and Qiskit widening as the number of chiplets increases. This indicates that \sys-METIS scales effectively with the number of chiplets.

\section{Conclusion}

This paper identifies new optimization opportunities for compiling 2-local qubit Hamiltonian simulation on chiplet-based architectures.
We introduce the concept of grouping 2-local Pauli strings into \textit{Pauli sets} and \textit{Pauli blocks}, which enable efficient implementation of multi-target controlled gates using the \textit{highway} mechanism.
Building on these insights, we present \sys, the first quantum compiler tailored for 2-local Hamiltonian simulation on quantum chiplet architectures. \sys employs a hierarchical approach to optimizing cross-chiplet communication, reducing both circuit depth and operation count. Evaluation across various tasks shows that \sys outperforms existing general-purpose and domain-specific compilers in scalability and efficiency.

\bibliographystyle{ACM-Reference-Format}
\bibliography{references}

\end{document}